\theoremstyle{remark}
\newtheorem{remark}{Remark}
\theoremstyle{remark}
\newtheorem{theorem}{Theorem}
\theoremstyle{remark}
\newtheorem{corollary}{Corollary}
\theoremstyle{remark}
\newtheorem{lemma}{Lemma}
\theoremstyle{remark}
\newtheorem{definition}{Definition}
\theoremstyle{remark}
\newtheorem{example}{Example}
\theoremstyle{definition}
\newtheorem{transformation}{Transformation}
\newcommand{\tabcaption}{\def\@captype{table}\caption}
\title{Placement Delivery Array Design for Combination Networks with Edge Caching}
\begin{document}
\author{\IEEEauthorblockN{Qifa Yan}
\IEEEauthorblockA{ LTCI, T\'el\'ecom ParisTech\\ 
75013 Paris, France\\
Email: qifa.yan@telecom-paristech.fr}
\and
\IEEEauthorblockN{Mich\`ele Wigger}
\IEEEauthorblockA{LTCI, T\'el\'ecom ParisTech \\
75013 Paris, France\\
Email:  michele.wigger@telecom-paristech.fr}
\and
\IEEEauthorblockN{Sheng Yang\\   }
\IEEEauthorblockA{L2S, CentraleSup\'elec\\
91190 Gif-sur-Yvette, France\\
Email:sheng.yang@centralesupelec.fr}}

\maketitle
\begin{abstract}
  A major practical limitation of the Maddah-Ali-Niesen coded caching techniques is their
  high subpacketization level. For the simple network with a single server and multiple
  users, Yan \emph{et al.} proposed an alternative scheme with the so-called placement
  delivery arrays (PDA). Such a scheme requires slightly higher transmission rates but
  significantly reduces the subpacketization level. In this paper, we extend the PDA
  framework and propose three low-subpacketization schemes for combination networks, i.e.,
  networks with a single server, multiple relays, and multiple cache-aided users that are
  connected to subsets of relays. One of the schemes achieves the cutset lower bound on the
  link rate when the cache memories are sufficiently large. Our other two schemes apply only
  to \emph{resolvable} combination networks. For these networks and for a wide range of cache sizes, the new schemes perform
  closely to the coded caching schemes that directly apply Maddah-Ali-Niesen scheme
  while having significantly reduced subpacketization levels.
\end{abstract}
\section{Introduction}
Caching is a promising  approach to alleviate current network traffics driven by on-demand
video streaming. The idea is to pre-fetch contents during off-peak hours before the actual
user demands, so as to reduce traffic at peak hours when the demands are made.
Therefore, the communication takes place in two phases:  \emph{content placement} at
off-peak hours and \emph{content delivery} at peak hours.

In their seminal work \cite{Maddah2014fundamental}, Maddah-Ali and Niesen modeled the content delivery phase by a shared  error-free link from the single server to all users, and they showed that  delivery traffic in this \emph{shared-link setup} can be highly reduced through a joint design of content placement  and delivery strategy that exploits multicasting opportunities. The scheme is known as \emph{coded caching} and has been extended to various settings, e.g., Gaussian broadcast channels \cite{Gaussian2016},  multi-antenna fading channels \cite{Ngo2017,Elia2017,Caire2017}, or \emph{combination networks} \cite{MJi2015,LiTangISIT2016,Wan2017Bound,combination2017,Zewail2017conf,Wan2018new} as considered in this paper. In a $(h,r)$-combination network,  a single server communicates over dedicated error-free links with $h$ relays and these relays in their turn communicate over dedicated  error-free links with  $h \choose r$ users that have local cache memories. Each user is connected to a different subset of $r$ relays. 
Ji \emph{et al.}~first investigated this network \cite{MJi2015} for the case when $r$ divides $h$ (denoted by $r|h$), and the achievable bound
was improved in \cite{LiTangISIT2016}. In \cite{Wan2017Bound}, Wan \emph{et
al.}~tightened the lower bound under the constraint of uncoded placement, and the achievable
bound for the case when the memory size is small. In  \cite{Zewail2017conf,combination2017,Wan2018asymmetric},   Maxmimum Distance Separable (MDS) codes are applied before placement. In particular, \cite{Zewail2017conf,combination2017} show that the upper bound in \cite{LiTangISIT2016} is achievable for any $(h,r)$ combination network, and \cite{Wan2018asymmetric} shows that  even lower delivery rates are achievable.
  As the results of our work require
memory size larger than that of \cite{Wan2017Bound} and is uncoded placement, we only compare
our results with those from \cite{LiTangISIT2016}.

A key factor that  limits the application of  all forms of coded caching in practice, is the
required high \emph{subpacketization level} \cite{Shanmugam2016}, i.e.,
the number of subpackets must grow exponentially with the number of users.
In contrast, \cite{YanPDA2017,YanBipatite2017,LiTang,Ge2016,linear2017}
proposed new caching schemes that have much lower subpacketization levels but slightly increased transmission rate. A useful tool for representing these new  schemes
is  \emph{placement delivery array (PDA)} introduced in \cite{YanPDA2017}. PDAs
characterize both the (uncoded)~placement and delivery strategies with a single array
\cite{YanPDA2017}, and thus facilitate the design of good caching schemes. 

In this paper,
    we first introduce  \emph{combinational PDAs (C-PDA)} to represent uncoded placement  and delivery strategies for combination networks  in a single array. We also determine the rate, memory, and subpacketization requirements of the caching scheme  corresponding to a given C-PDA.
          Then, for the case when $r|h$, we  describe how any standard PDA with ${h-1 \choose r-1}$ columns can be transformed into a  C-PDA for a $(h,r)$-combination network.
          With this transformation and the previous low-subpacketization schemes for the
          single-shared link setup,  two low-subpacketization schemes for
          $(h,r)$-combination networks are obtained. 
           The performances of the new schemes are close to the scheme in \cite{LiTangISIT2016}, but have significantly lower subpacketization level.
          Finally, for arbitrary $(h,r)$, we propose a C-PDA  for which  the corresponding caching scheme  achieves the cut-set lower bound for sufficiently large cache sizes. 

Due to the space limitation, we only provide sketches of the proofs. For  details, see \cite{C_PDA2018arXiv}.

\textbf{Notations:} We denote the set of positive integers by $\mathbb{N}^+$. For
$n\in\mathbb{N}^+$,  denote the set $\{1,2,\cdots,n\}$ by $[n]$. The  Exclusive OR
operation is denoted by $\oplus$. For a positive real number $x$, $\lceil x\rceil$ is the
least integer that is not less than $x$.
\section{System Model and Preliminaries}\label{sec:model_preliminaries}
%

\subsection{System Model}\label{sec:model}
Consider the $(h,r)$-combination network illustrated in Fig.~\ref{fig:network}, where $h$ and $r$ are positive integers and $r \leq h$. The network comprises of a single server, $h$ relays:
\begin{IEEEeqnarray}{rCl}
\mathcal{H}&=&\{H_1,H_2,\cdots,H_h\},\notag
\end{IEEEeqnarray}
and
$K={h\choose r}$ users  labeled by all the $r$-dimensional subsets of  relay indices $[h]$:
\begin{IEEEeqnarray}{rCl}
	\label{eq:Tset}\textbf{T}&\triangleq&\big\{T\colon\; {T}\subset [h] \;\textnormal{and} \;|{T}|=r\big\}.
	\end{IEEEeqnarray} Each user has a  local cache memory of size $MB$ bits. The relays have no cache memories.
The server can directly access a  library $\mathcal{W}$ of $N$ files,
\begin{IEEEeqnarray}{c}
\mathcal{W}=\{W_1,W_2,\cdots,W_N\},\notag
 \end{IEEEeqnarray}
 where each file $W_n$ consists of $B$ independent and identically uniformly distributed (i.i.d.) random bits. 
 The server can send $R B$ bits to each of the $h$ relays over   an individual  error-free link. Here, $R$ is  the \emph{link rate} (or \emph{rate} for brevity). Each relay can communicate with some of the users. 
 Specifically,  user
$T$ is connected through individual error-free links of rate $R$   to  the $r$ relays with index in $T$, i.e., to relays $\{H_i:i\in{T}\}$.

\begin{figure}[htp!]
  \centering
  \includegraphics[width=0.27\textwidth,height=0.201\textwidth]{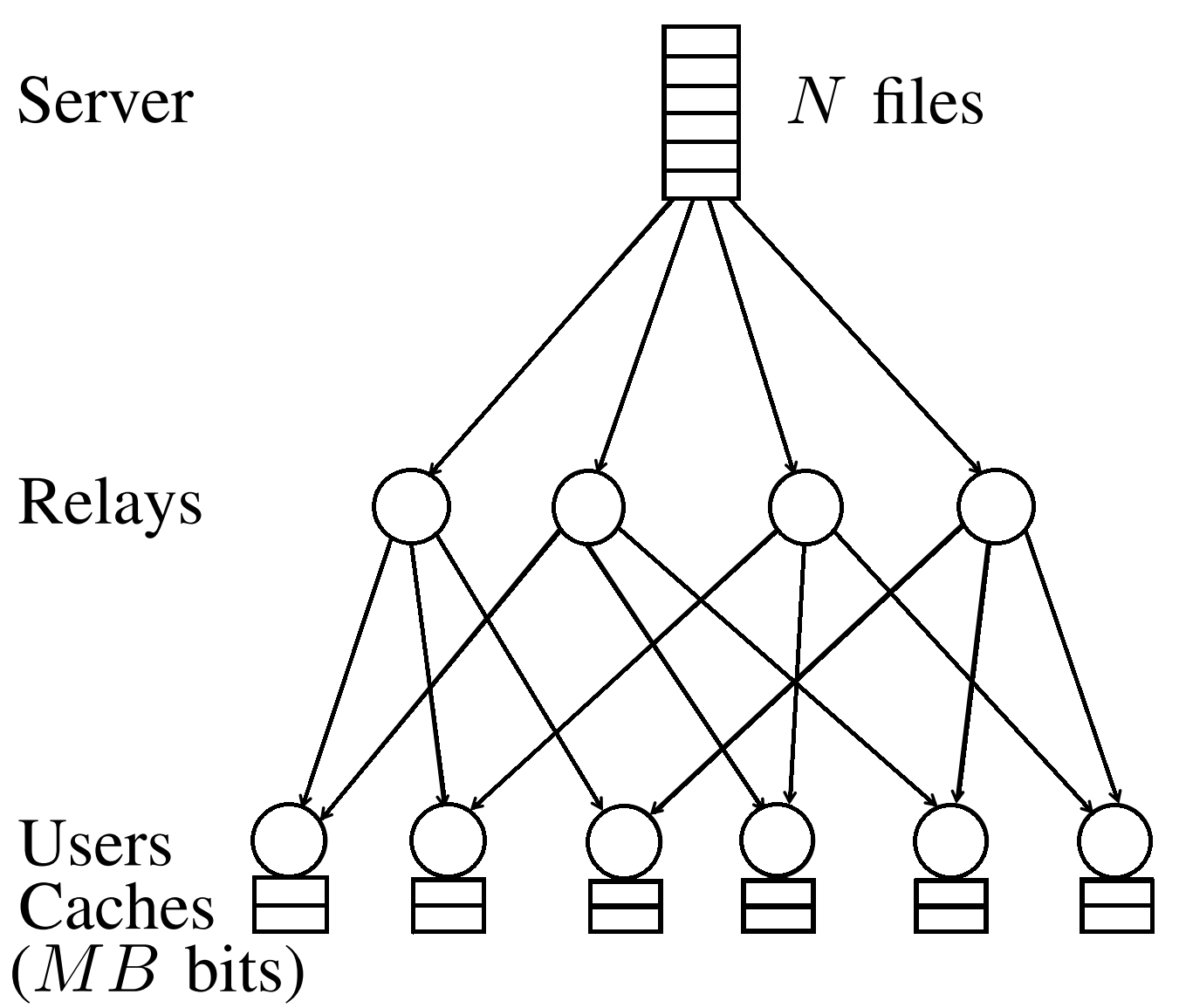}\\
  \caption{A $(4,2)$-combination caching network.}\label{fig:network}
\end{figure}

We now describe the  storage and communication operations. The system operates in two consecutive phases.\\
 \textbf{1. Placement Phase:} In this phase, each user ${T}$  directly accesses to the file library $\mathcal{W}$ and can store  an arbitrary function thereof in its cache memory, subject to the  space limitation of $MB$ bits.   Denote the cached content at user ${T}$ by $Z_{{T}}$, and the set of all cached contents by $\bm Z\triangleq \{Z_{T} \colon T\in\mathbf{T}\}$.\\
\textbf{2. Delivery Phase:} In this phase, each user ${{T}}$ arbitrarily requests a file $W_{d_{T}}$  from the server, where $d_{{T}}\in[N]$. The users' requests
    $\bm{d}\triangleq\{d_{T}\colon  {T}\in\mathbf{T}\}\in[N]^K$ are revealed to all parties, i.e., to   server,  relays, and  users. For each $i\in[h]$, the  server  sends  $R B$ bits to relay $H_i$:
    \begin{IEEEeqnarray}{c}
    X_i=\phi_i( W_1,\ldots, W_N,\bm{Z}, \bm{d}),\notag
    \end{IEEEeqnarray}
    for some   function $\phi_i\colon \mathbb{F}_2^{B \cdot N} \times \mathbb{F}_2^{B\cdot M\cdot K}\times[N]^K\rightarrow\mathbb{F}_2^{B\cdot R}$. Relay $H_i$ forwards  the signal $X_i$ to all  connected users.\footnote{Previous works on combination networks allow the relay to send different arbitrary functions to their connected users. But 
     since the rate of relay-to-users links needs not exceed the rate of the server-to-relays, this apparently more general setup does not allow for better communication strategies.}

At the end of this phase,  each  user $T\in \mathbf{T}$, decodes its requested file $W_{d_{T}}$ based on all its received signals $\bm{X}_T\triangleq \{ X_i\colon i \in T\}$, its cache content $Z_T$, and  demand vector $\bm{d}$:
\begin{IEEEeqnarray}{c}
 \hat{W}_{d_T} = \psi_T (\bm{X}_{T}, Z_T, \bm{d}  ),\notag
\end{IEEEeqnarray}
for some  function $\psi_T\colon \mathbb{F}_2^{B\cdot R\cdot r} \times \mathbb{F}_2^{B\cdot M} \times [N]^K\rightarrow \mathbb{F}_2^B$.

The \emph{optimal worst-case  rate} $R^\star(M)$ is the smallest delivery rate $R$ for which
there exist some placement and delivery strategies so  that the probability of decoding error $\hat{W}_{d_T} \neq W_{d_T}$ vanishes asymptotically as $B\to \infty$ at all the users and for any possible demand $\bm{d}$.

Special focus will be given to $(h,r)$-combination networks with $r|h$.
In this case,  the users can be partitioned into subsets so that in each subset  exactly one  user is connected to a given relay, see     \cite{LiTangISIT2016}.
\begin{definition}[Resolvable Networks] \label{def:resolvable} \emph{A combination network is called \emph{resolvable} if  the user set  $\mathbf{T}$  can be partitioned into subsets $\mathcal{P}_1,\mathcal{P}_2,\cdots,\mathcal{P}_{\tilde{K}}$ so that for all $i\in [\tilde{K}]$ the following two conditions hold:
	\begin{itemize}
		\item If  $T,T'\in \mathcal{P}_i$ and $T\neq T'$, then $T\cap T'=\emptyset$.
		\item  $\bigcup_{T:T\in\mathcal{P}_i}T=[h]$.
	\end{itemize}
Subsets $\mathcal{P}_1,\mathcal{P}_2,\cdots,\mathcal{P}_{\tilde{K}}$ satisfying these conditions are called \emph{parallel classes}.}
\end{definition}


\subsection{Preliminaries: Shared-link Setup and PDAs}
For the purpose of this subsection, consider the original coded caching setup \cite{Maddah2014fundamental}  with a single server and $K$ users each having a cache memory of $M B$ bits. The server is connected to the users through a  shared error-free link of rate $R$. 

Yan \emph{et al.}~\cite{YanPDA2017} proposed to unify the description of uncoded  placement and
delivery strategies for this {shared-link setup} in a single array, called the \emph{placement delivery array (PDA)}.
\begin{definition}[PDA,\cite{YanPDA2017}]\label{def1}\emph{
For  positive integers $K,F, Z$ and $S$, an $F\times K$ array  $\bm{A}=[a_{j,k}]$, $j\in [F], k\in[K]$, composed of a specific symbol $``*"$  and $S$ ordinary symbols
$1,\cdots, S$, is called a \emph{$(K,F,Z,S)$ placement delivery array (PDA)}, if it satisfies the following conditions:
\begin{enumerate}
  \item [C$1$.] The symbol $``*"$ appears $Z$ times in each column;
  \item [C$2$.] Each ordinary symbol occurs at least once in the array;
  \item [C$3$.] For any two distinct entries $a_{j_1,k_1}$ and $a_{j_2,k_2}$,   we have $a_{j_1,k_1}=a_{j_2,k_2}=s$, an ordinary symbol  only if
  \begin{enumerate}
     \item [a.] $j_1\ne j_2$, $k_1\ne k_2$, i.e., they lie in distinct rows and distinct columns; and
     \item [b.] $a_{j_1,k_2}=a_{j_2,k_1}=*$, i.e., the corresponding $2\times 2$  sub-array formed by rows $j_1,j_2$ and columns $k_1,k_2$ must be of the following form
  \begin{IEEEeqnarray}{c}
    \left[\begin{array}{cc}
      s & *\\
      * & s
    \end{array}\right]~\textrm{or}~
    \left[\begin{array}{cc}
      * & s\\
      s & *
    \end{array}\right].
  \end{IEEEeqnarray}
   \end{enumerate}
\end{enumerate}
We refer to the parameter $F$ as the subpacketization level. Specially, if each ordinary symbol $s\in[S]$  occurs exactly
$g$ times, $\bm A$ is called a $g$-$(K,F,Z,S)$ PDA, or $g$-PDA for short.}
\end{definition}
Any PDA can be transformed into a caching scheme having the following performance \cite{YanPDA2017}:
\begin{remark}
	 A $(K,F,Z,S)$ PDA corresponds to a  caching scheme for the shared error-free link setup with $K$ users that is of subpacketization  level $F$, requires cache size    $M=\frac{Z}{F} N$, and delivery rate $R=\frac{S}{F}$.
\end{remark}


Two low-subpacketization schemes were proposed in \cite{YanPDA2017}:

\begin{lemma}[PDA for $\frac{N}{M}\in \mathbb{N}^+$, \cite{YanPDA2017}]\label{lem:1} \emph{For any $q,m\in\mathbb{N}^+, q\geq2$, there exists a $(m+1)$-$(q(m+1),q^m,q^{m-1},q^{m+1}-q^{m})$ PDA, with rate $R=\frac{N}{M}-1$ and subpacketization level $F=(\frac{N}{M})^{\frac{KM}{N}-1}$.}
\end{lemma}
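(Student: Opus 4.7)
The plan is to give an explicit algebraic construction of the PDA and then verify each of the defining properties. Label the $q(m+1)$ columns by pairs $(k,j)\in [m+1]\times \mathbb{Z}_q$ and the $q^m$ rows by vectors $\mathbf{f}=(f_1,\ldots,f_m)\in\mathbb{Z}_q^m$. Attach to every row $\mathbf{f}$ its \emph{extension} $g(\mathbf{f})=(f_1,\ldots,f_m,-f_1-\cdots-f_m \bmod q)\in\mathbb{Z}_q^{m+1}$, whose entries by construction sum to zero modulo $q$. Place a $``*"$ in cell $(\mathbf{f},(k,j))$ iff the $k$-th coordinate of $g(\mathbf{f})$ equals $j$; otherwise assign to that cell the ordinary symbol $\mathbf{h}\in\mathbb{Z}_q^{m+1}$ obtained from $g(\mathbf{f})$ by overwriting its $k$-th coordinate with $j$. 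Since $j\neq g_k(\mathbf{f})$ in the non-star case, the vector $\mathbf{h}$ has nonzero coordinate sum, so the symbols lie in the set $\mathcal{S}$ of $q^{m+1}-q^m$ such vectors.

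The counting axioms C1 and C2 fall out immediately. For a fixed column $(k,j)$, the equation $g_k(\mathbf{f})=j$ is a single linear constraint on $\mathbf{f}\in\mathbb{Z}_q^m$, hence has exactly $q^{m-1}$ solutions, giving $Z=q^{m-1}$. For each prospective symbol $\mathbf{h}\in\mathcal{S}$ and each $k\in[m+1]$, the requirements that $g(\mathbf{f})$ agree with $\mathbf{h}$ off coordinate $k$ and that $j=h_k$ uniquely determine $(\mathbf{f},j)$; the condition $\sum_i h_i\neq 0$ precisely guarantees that the resulting entry is indeed non-star. Hence every $\mathbf{h}\in\mathcal{S}$ appears in exactly $m+1$ cells, which simultaneously establishes the $(m+1)$-regularity, Condition C2, and the count $S=q^{m+1}-q^m$.

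The main obstacle is Condition C3, which I handle by exploiting the same uniqueness. Suppose two distinct non-star entries $(\mathbf{f}_1,(k_1,j_1))$ and $(\mathbf{f}_2,(k_2,j_2))$ carry a common symbol $\mathbf{h}$. The uniqueness above, applied to the single index $k_1$, would force $(\mathbf{f}_1,j_1)=(\mathbf{f}_2,j_2)$ if $k_1=k_2$; hence $k_1\neq k_2$ and the columns are distinct. Outside the two coordinates $k_1,k_2$ we have $g_i(\mathbf{f}_1)=h_i=g_i(\mathbf{f}_2)$, while $g_{k_1}(\mathbf{f}_1)\neq h_{k_1}=j_1$ because the first entry is not a star; in particular $\mathbf{f}_1\neq \mathbf{f}_2$. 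Finally, since $k_2\neq k_1$ the matching of $g(\mathbf{f}_1)$ with $\mathbf{h}$ at position $k_2$ gives $g_{k_2}(\mathbf{f}_1)=h_{k_2}=j_2$, so the cross cell $(\mathbf{f}_1,(k_2,j_2))$ carries a $``*"$, and by the symmetric argument so does $(\mathbf{f}_2,(k_1,j_1))$. This realises the $2\times 2$ sub-array shape demanded by~C3.

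To conclude, translate the PDA parameters $(K,F,Z,S)=(q(m+1),q^m,q^{m-1},q^{m+1}-q^m)$ via the correspondence recalled in the preliminaries: $M/N=Z/F=1/q$ gives $N/M=q$, the rate $R=S/F=q-1=N/M-1$, and since $KM/N=m+1$ we recover $F=q^m=(N/M)^{KM/N-1}$, completing the proof.
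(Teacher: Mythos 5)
Your construction is correct and complete: the zero-sum extension $g(\mathbf{f})$, the uniqueness argument showing each nonzero-sum vector occurs exactly once per column-group $k$, and the cross-cell verification of C3 all check out, and the parameter translation to $R=N/M-1$ and $F=(N/M)^{KM/N-1}$ is right. Note that the paper itself gives no proof of this lemma -- it is quoted directly from the cited reference \cite{YanPDA2017} -- and your argument is essentially a clean reconstruction of that reference's original construction (rows indexed by $\mathbb{Z}_q^m$, columns by $[m+1]\times\mathbb{Z}_q$, stars on a coordinate/sum constraint), so there is nothing to add.
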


\begin{lemma}[PDA for $\frac{N}{N-M} \in \mathbb{N}^+$, \cite{YanPDA2017}]\label{lem:2}\emph{For any $q,m\in\mathbb{N}^+,q\geq 2$, there exists a $(q-1)(m+1)$-$(q(m+1),(q-1)q^m,(q-1)^2q^{m-1},q^m)$ PDA, with rate $R=\frac{N}{M}-1$ and subpacketization level $F=\frac{M}{N-M}\cdot\big(\frac{N}{N-M}\big)^{K(1-\frac{M}{N})-1}$.}
\end{lemma}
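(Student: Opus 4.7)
The plan is to construct the PDA explicitly and then verify its three defining axioms, using the same algebraic flavour as the construction behind Lemma~\ref{lem:1}. I would index the $K = q(m+1)$ columns by pairs $(k,z) \in [m+1] \times \mathbb{Z}_q$ and the $F = (q-1)q^m$ rows by vectors $\mathbf{f} = (f_1,\ldots,f_{m+1}) \in \mathbb{Z}_q^{m+1}$ whose coordinates sum to a nonzero value modulo $q$. The assignment rule would be: place a star at position $(\mathbf{f},(k,z))$ exactly when $f_k \neq z$, and otherwise label the entry with the ordinary symbol $\mathbf{h} = \mathbf{f} - s\,\mathbf{e}_k \in \mathbb{Z}_q^{m+1}$, where $s = \sum_i f_i \neq 0$ and $\mathbf{e}_k$ is the $k$-th standard basis vector. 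Since $\sum_i h_i = 0$, the symbols live in the zero-sum hyperplane of $\mathbb{Z}_q^{m+1}$, which has exactly $q^m = S$ elements.

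Verifying C1 and C2 will be routine counting. For column $(k,z)$ the stars correspond to rows with $f_k \neq z$; fixing $f_k$ to one of the $q-1$ admissible values leaves $(q-1)q^{m-1}$ completions of the remaining coordinates consistent with the zero-sum exclusion $\sum_i f_i \neq 0$, giving $(q-1)^2 q^{m-1} = Z$ stars per column. For C2 I would show that for each fixed $\mathbf{h}$ in the hyperplane the $(m+1)(q-1)$ pairs $(k,s) \in [m+1] \times (\mathbb{Z}_q \setminus \{0\})$ are in bijection with the occurrences of symbol $\mathbf{h}$ via the map $(k,s) \mapsto (\mathbf{h}+s\,\mathbf{e}_k,\,(k,\,h_k+s))$, thereby establishing the multiplicity $g = (q-1)(m+1)$ and in particular that every symbol actually occurs.

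The main obstacle is C3, and the very choice $\mathbf{h} = \mathbf{f} - s\,\mathbf{e}_k$ is designed to force the cross-star condition. Suppose two occurrences of a common symbol $\mathbf{h}$ appear at $(\mathbf{f},(k_1,z_1))$ and $(\mathbf{f}',(k_2,z_2))$; the parametrisation above yields $\mathbf{f} = \mathbf{h} + s_1 \mathbf{e}_{k_1}$ and $\mathbf{f}' = \mathbf{h} + s_2 \mathbf{e}_{k_2}$ with $s_1,s_2 \neq 0$. When $k_1 \neq k_2$, both rows coincide with $\mathbf{h}$ outside their own pivots, so $f_{k_2} = h_{k_2} \neq h_{k_2}+s_2 = z_2$ and $f'_{k_1} = h_{k_1} \neq h_{k_1}+s_1 = z_1$, which places stars at the two cross-positions as C3b demands; the case $k_1 = k_2$ is handled similarly using $s_1 \neq s_2$ forced by distinct entries. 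A naive choice like $\mathbf{h} = \mathbf{f}_{\setminus k}$ fails precisely here, because nothing would then force $f_{k_2}$ to avoid $z_2$. Finally, the rate and subpacketization formulas follow from substituting $K = q(m+1)$ and $M/N = Z/F = (q-1)/q$ into $R = S/F$ and $F = (q-1)q^m$, which is a short algebraic manipulation.
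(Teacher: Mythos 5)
Your construction is correct, and the verification of C1--C3 goes through exactly as you sketch: the counts give $Z=(q-1)^2q^{m-1}$ stars per column and $g=(q-1)(m+1)$ occurrences per symbol, and the parametrisation $\mathbf{f}=\mathbf{h}+s\,\mathbf{e}_k$ with $s\neq 0$ forces the two cross-entries to be stars in both the $k_1\neq k_2$ and $k_1=k_2$ cases. The paper itself does not prove this lemma but imports it from \cite{YanPDA2017}, and your scheme (rows indexed by the nonzero-sum vectors of $\mathbb{Z}_q^{m+1}$, columns by pairs $(k,z)$, a star iff $f_k\neq z$, and ordinary symbol $\mathbf{f}-s\,\mathbf{e}_k$ in the zero-sum hyperplane otherwise) is essentially the construction given there, so your argument matches the intended one.
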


\section{C-PDAs for Combination Networks}\label{sec:main}

A PDA is especially useful for a combination network, if for any coded packet, all the intended users are  connected to the same relay.  This  allows the server to send each coded packet only to this single relay. The following definition ensures the desired property.

\begin{definition}\label{Definition:combinationPDA}\emph{Let $h,r \in \mathbb{N}^+$ with $r \leq
  h$, and $K= {h \choose r}$. A $\left(K,F,Z,S\right)$ PDA  is called
  \emph{$(h,r)$-combinational}, for short \emph{C-PDA},  if its columns can be labeled by
  the sets in $\mathbf{T}$
in a way that for any  ordinary symbol $s\in[S]$, the  labels  of all columns containing symbol $s$ have nonempty intersection.}
\end{definition}
The following example presents  a $(6,6,2,12)$ C-PDA for $h=4$ and $r=2$, and explains how this C-PDA leads to a caching scheme for  the $(4,2)$-combination network  in Fig.~\ref{fig:network}.
\begin{example}\label{exam:CPDA} 
Let $h=4$ and $r=2$. The following table presents a C-PDA combined with a  labeling of the
columns that satisfies the condition in Definition~\ref{Definition:combinationPDA}.
	\begin{table}[!htp]\centering\caption{A C-PDA for the setting  in Fig.~\ref{fig:network}.}\label{eqn:example}
		\small{\begin{tabular}{c|c|c|c|c|c}  \bottomrule
				 $\{1,2\} $& $\{3,4\}$ & $\{1,3\}$ & $\{2,4\}$ & $\{1,4\}$ &
				  $\{2,3\}$\\
				\hline
				 $*$ &$*$& $1$& $4$& $2$& $5$\\
				$1$ & $7 $&$*$ & $* $& $3$ & $6$\\
				$2$ & $8$ & $3$ & $6 $& $*$ & $*$\\
				$*$& $*$ &$ 7$& $10 $& $11$& $8$\\
			 $4$ & $10$ & $*$& $*$& $12$ &$ 9$\\
				 $5$& $11$ & $9$ &$12$ & $*$ & $*$\\\hline
				\toprule
			\end{tabular}}
		\end{table}
		
		The above C-PDA implies the following caching scheme for the $(h=4,r=2)$ combination network in Fig.~\ref{fig:network}. \\
		\underline{1. Placement phase:} Each file  is split into $6$ packets (i.e., the number of rows of the C-PDA), i.e., $W_n=\{W_{n,i}\colon ~i \in [6], ~ n\in [N]\}$. Place the following  cache contents at the users:
		\begin{IEEEeqnarray}{rCcCl}
		Z_{\{1,2\}}&=&Z_{\{3,4\}}&=&\{W_{n,1}, W_{n,4} \colon \ n\in[N]\}\notag\\
		Z_{\{1,3\}}&=&Z_{\{2,4\}}&=&\{W_{n,2},W_{n,5} \colon \ n\in[N]\}\notag\\
		Z_{\{1,4\}}&=&Z_{\{2,3\}}&=&\{W_{n,3},W_{n,6} \colon \ n\in[N]\}\notag
		\end{IEEEeqnarray}
		\underline{2. Delivery phase:} Table \ref{TableII} shows the signals   $X_1,\ldots, X_4$ the server sends to the four relays when users $U_{\{1,2\}},$ $U_{\{3,4\}},$ $U_{\{1,3\}},$ $U_{\{2,4\}},$ $U_{\{1,4\}},$ $U_{\{2,3\}}$ request files  $W_{1},$ $W_{2},$ $W_{3},$ $W_{4},$ $W_{5},$ $W_{6}$, respectively. Each of the coded signals consists of $B/6$ bits, and thus the required rate is $R=1/2$. 
		
		\begin{table}[!htp]\centering\caption{Delivered signals  in Example \ref{exam:CPDA}.}\label{TableII}
			\small{\begin{tabular}{c|c|c|c}  \bottomrule
					Signal & Symbol $s$ & Coded Signal & Intended Users\\\hline
					&$1$&$W_{1,2}\oplus W_{3,1}$& $U_{\{1,2\}},~U_{\{1,3\}}$\\
					$X_1$&$2$&$W_{1,3}\oplus W_{5,1}$&$U_{\{1,2\}},~U_{\{1,4\}}$\\
					&$3$&$W_{3,3}\oplus W_{5,2}$& $U_{\{1,3\}},~U_{\{1,4\}}$\\\hline
					&$4$&$W_{1,5}\oplus W_{4,1}$& $U_{\{1,2\}},~U_{\{2,4\}}$\\
					$X_2$&$5$&$W_{1,6}\oplus W_{6,1}$&$U_{\{1,2\}},~U_{\{2,3\}}$\\
					&$6$&$W_{4,3}\oplus W_{6,2}$& $U_{\{2,4\}},~U_{\{2,3\}}$\\\hline
					&$7$&$W_{2,2}\oplus W_{3,4}$& $U_{\{3,4\}},~U_{\{1,3\}}$\\
					$X_3$&$8$&$W_{2,3}\oplus W_{6,4}$&$U_{\{3,4\}},~U_{\{2,3\}}$\\
					&$9$&$W_{3,6}\oplus W_{6,5}$&$U_{\{1,3\}},~U_{\{2,3\}}$\\\hline
					&$10$&$W_{2,5}\oplus W_{4,4}$&$U_{\{3,4\}},~U_{\{2,4\}}$\\
					$X_4$&11&$W_{2,6}\oplus W_{5,4}$&$U_{\{3,4\}},~U_{\{1,4\}}$\\
					&$12$&$W_{4,6}\oplus W_{5,5}$&$U_{\{2,4\}},~U_{\{1,4\}}$\\
					\toprule
				\end{tabular}}
			\end{table}
			
				Table~\ref{TableII} also indicates the users that are actually interested by each coded signal. In the problem definition, we assumed that each relay forwards its entire received signal to all its connected users. From Table~\ref{TableII}, it is obvious that it would suffice to forward only a subset of the bits to each user.

		\end{example}

We now  present a general way to associate  a
$\left(K,F,Z,S\right)$ C-PDA to a caching scheme for a $(h,r)$-combination network where $h,r$ are positive integers with $r\leq h$.

\textbf{Placement phase:}
 Label the columns of the C-PDA with the  set $\mathbf{T}$ so that  the condition in Definition~\ref{Definition:combinationPDA} is satisfied. Placement is the same as for standard PDAs. That means, split each file $W_{d}$ into $F$ subpackets $(W_{d,1},\ldots, W_{d,F})$ each consisting of $B/F$ bits. Place subfiles $\{W_{n,i}\}_{n=1}^N$ into the cache memory of user $T$, if the C-PDA has entry $``*"$ in row $i$ and the column corresponding to label $T$.   This placement strategy requires a cache  size of  $M=N\cdot \frac{Z}{F}$.

 \textbf{Delivery phase:} The server first creates the coded signals pertaining to each ordinary symbol $s\in[S]$ in the same way as for standard PDAs. It then delivers the coded signal created for each ordinary symbol $s\in[S]$  to one of the relays whose index is contained in the labels of  all columns containing $s$. The \emph{average}  rate  required on the $h$ server-to-relay links is $R_{\textnormal{avg}}= \frac{S}{F h}$.

When in the described scheme  the server sends the same number of bits to each relay, then the following theorem follows immediately from the above description. In fact, in this case subpacketization level $F$ is sufficient. Otherwise, the rate on each server-to-relay link has to be made equal  by first splitting each file into $h$ subfiles and then applying a caching scheme with the same C-PDA but a different  shifted version of the column labels to each of the subfiles.

\begin{theorem}\label{thm:CPDA}\emph{
	Given a $\left(K,F,Z,S\right)$ C-PDA. For any  $(h,r)$ combination network with $K={h \choose r}$, it holds that  $
	R^\star\left( 	M= \frac{N \cdot Z}{F}\right) \leq  \frac{S}{F h}
	$.  This upper bound is achieved by a scheme of subpacketization level not exceeding $hF$.}
 \end{theorem}

 \section{Transforming PDAs into Larger C-PDAs}\label{sec:PDAconstruction}

We present a way of constructing  C-PDAs  for  resolvable $(h,r)$-combination networks (i.e., when $r|h$) from any smaller PDA  that has $\tilde{K}={h-1\choose r-1}$ columns. We start with an example.

\begin{example} Reconsider Example \ref{exam:CPDA}, where $h=4$ and $r=2$, and
	notice that for this resolvable network  (see Definition~\ref{def:resolvable}), a possible partition of $\mathbf{T}$  is  $\mathcal{P}_1=\{\{1,2\},\{3,4\}\}, \mathcal{P}_2=\{\{1,3\},\{2,4\}\}$ and $\mathcal{P}_3=\{\{1,4\},\{2,3\}\}$.
	Consider now the
	$(3,3,1,3)$  PDA of the Maddah-Ali \& Niesen scheme with $\tilde{K}=3$ users:
\begin{IEEEeqnarray}{rCl}
\bm A&=&\left[\begin{array}{ccc}
        * & 1 & 2 \\
        1 & * & 3 \\
        2 & 3 & *
      \end{array}
\right].\notag
\end{IEEEeqnarray}
One can verify that the C-PDA in Table \ref{eqn:example} is obtained from above PDA $\bm A$ by
replicating each column of $\bm A$  first horizontally and  then  each column of the resulting array also vertically,   and by then  replacing the  3 replicas of each ordinary symbol with 3 new (unused)  symbols. The column labels are obtained by labeling  the first two columns of $\bm A$ with the two elements of  $\mathcal{P}_1$, the following two columns  with the  elements of $\mathcal{P}_2$, and the last two columns with the  elements of $\mathcal{P}_3$. 


\end{example}

We now present the general transformation method. We use the following notations. For a given user $T$, let $\delta(T)$ indicate the parallel class that $T$ belongs to, i.e., $\delta(T)=j$ iff $T\in \mathcal{P}_j$.
 Let $T[i]$ be the $i$-th smallest  element of $T$.  For example, if $T=\{2,4\}$, then $T[1]=2, T[2]=4$. Likewise, denote the inverse map by $T^{-1}$, i.e., $T[i]=j$ iff $T^{-1}[j]=i$. 
   \begin{transformation} \label{construction:CPDA}
Given a $(\tilde{K},\tilde{F},\tilde{Z},\tilde{S})$  PDA $\bm{\tilde{C}}=[\tilde{c}_{j,k}]$. Let the following  $(\tilde{F} r)$-by-$(\tilde{K} \frac{h}{r})$  array $\bm C$ be the outcome  applied to PDA $\bm{\tilde{C}}$ for  parameters $(h,r)$: 
\begin{IEEEeqnarray}{c}
\bm C=\left[\begin{array}{cccc}
              \bm{c}_{1,T_1} & \bm{c}_{1,T_2} &\cdots  & \bm{c}_{1,T_K} \\
              \bm{c}_{2,T_1} &\bm{c}_{1,T_2}  & \cdots & \bm{c}_{2,T_K} \\
              \vdots & \vdots & \ddots &\vdots  \\
              \bm{c}_{r,T_1} &\bm{c}_{r,T_2}  &\cdots  &\bm{c}_{r,T_K}
            \end{array}
\right],\notag
\end{IEEEeqnarray}
where $T_{1}, \ldots, T_K$ are the elements of the user set $\mathbf{T}$ in \eqref{eq:Tset}, and $\bm{c}_{i,T_{k}}=[c_{i,j,T_k}]_{j=1}^{\tilde{F}}$ is a single-column array of length $\tilde{F}$, with $j$-th entry
\begin{IEEEeqnarray}{c}
c_{i,j,T_k}=\left\{\begin{array}{ll}
                    *,&\mbox{if}~\tilde{c}_{j,\delta(T_k)}=*,  \\
                   \tilde{c}_{j,\delta(T_k)}+(T_k^{-1}[i]-1)\tilde{S},
                   &\mbox{if}~\tilde{c}_{j,\delta(T_k)}\neq *.
                 \end{array}
\right.\notag
\end{IEEEeqnarray}
   \end{transformation}

%
%

\begin{theorem} \label{thm:comPDA}\emph{Let $h,r$ be positive integers so that $r|h$, and $\tilde{K}={h-1 \choose r-1}$. Applying Transformation~1 with parameters $(h,r)$ to
a	 $(\tilde{K},\tilde{F},\tilde{Z},\tilde{S})$ PDA 
yields a $(K,F,Z,S)$ C-PDA, where 
\begin{IEEEeqnarray}{c}
K={h\choose r}, \quad F=r\tilde{F}, \quad Z=r\tilde{Z}, \quad \textnormal{and} \quad S=h\tilde{S}.\notag
	 \end{IEEEeqnarray}
With the resulting C-PDA,   subpacketization level $F=r\tilde{F}$ is sufficient to achieve the rate $R=\frac{S}{Fh}$.}
\end{theorem}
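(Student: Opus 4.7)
The plan is to verify that the constructed array $\bm C$ satisfies the three PDA conditions with parameters $(K,r\tilde F,r\tilde Z,h\tilde S)$, that it is combinational, and finally that subpacketization $F=r\tilde F$ is already enough without invoking the file split in Theorem~\ref{thm:CPDA}. The parameter counts are immediate from the block structure: $\bm C$ has $r\tilde F$ rows and $\binom{h-1}{r-1}\cdot (h/r)=\binom{h}{r}$ columns; inside each column $T_k$, the star pattern of column $\delta(T_k)$ of $\bm{\tilde C}$ is copied $r$ times, yielding $r\tilde Z$ stars; and the ordinary symbols take the form $\tilde s+(l-1)\tilde S$ with $\tilde s\in[\tilde S]$ and $l\in[h]$ (the shift in block $i$ of column $T_k$ being indexed by an element of $T_k$), each appearing at least once because $\delta$ is surjective onto $[\tilde K]$ and every $l\in[h]$ lies in exactly one user of every parallel class.

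The main content is the verification of condition C3. Suppose the ordinary symbol $s=\tilde s+(l-1)\tilde S$ appears at row $(i_1,j_1)$ of column $T_{k_1}$ and at row $(i_2,j_2)$ of column $T_{k_2}$. Matching the shifts forces $l\in T_{k_1}\cap T_{k_2}$, so $T_{k_1}$ and $T_{k_2}$ cannot lie in the same parallel class (their intersection would be empty by Definition~\ref{def:resolvable}), and hence $\delta(T_{k_1})\neq\delta(T_{k_2})$. Matching the underlying symbols gives $\tilde c_{j_1,\delta(T_{k_1})}=\tilde c_{j_2,\delta(T_{k_2})}=\tilde s$, so C3 applied to $\bm{\tilde C}$ forces $j_1\neq j_2$ and $\tilde c_{j_1,\delta(T_{k_2})}=\tilde c_{j_2,\delta(T_{k_1})}=*$. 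These stars propagate to $\bm C$ in the required off-diagonal positions of the induced $2\times 2$ subarray, because stars in a block of $\bm C$ occur precisely at the rows where the corresponding column of $\bm{\tilde C}$ has stars. Together with $(i_1,j_1)\neq(i_2,j_2)$, this closes C3.

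The combinational property is a free by-product of the same analysis: symbol $\tilde s+(l-1)\tilde S$ can only occur in columns $T_k$ with $l\in T_k$, so the labels of all columns containing a given symbol share the index $l$. For the subpacketization claim, I would route the coded message for symbol $\tilde s+(l-1)\tilde S$ through relay $H_l$; each relay then handles exactly $\tilde S$ of the $h\tilde S$ symbols, so every server-to-relay link carries the same number of bits, matching the target rate $S/(Fh)=\tilde S/(r\tilde F)$ uniformly. Since the per-link loads are already balanced, the shift trick of Remark~\ref{remark:F} becomes unnecessary and $F=r\tilde F$ suffices. The only delicate step is the parallel class argument inside C3, where the disjointness provided by Definition~\ref{def:resolvable} is exactly what rules out the bad case; everything else is routine bookkeeping of indices and shifts.
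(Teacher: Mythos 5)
Your proposal follows essentially the same route as the paper's proof: verify C1--C3 block by block, read the common relay index $l$ off the shift $(l-1)\tilde S$ to obtain both the combinational labeling and the routing of each coded packet, and observe that every relay carries exactly $\tilde S$ packets so that the balanced-load case of Remark~\ref{remark:F} applies and no further file splitting is needed. The parameter counts, the C2 argument, and the rate/subpacketization bookkeeping are all correct.

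There is, however, one gap in your verification of C3. From two occurrences of the same ordinary symbol in columns $T_{k_1}$ and $T_{k_2}$ you conclude, via $l\in T_{k_1}\cap T_{k_2}$, that the two users lie in different parallel classes and hence $\delta(T_{k_1})\neq\delta(T_{k_2})$. That inference invokes the disjointness clause of Definition~\ref{def:resolvable}, which only applies when $T_{k_1}\neq T_{k_2}$; you never exclude the case $T_{k_1}=T_{k_2}$, i.e., a repeated symbol inside a single column of $\bm C$ at two different row positions. In that case $\delta(T_{k_1})=\delta(T_{k_2})$, and your appeal to C3 of $\bm{\tilde C}$ (which needs two \emph{distinct} columns to yield $j_1\neq j_2$ and the off-diagonal stars) does not get off the ground. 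The case is easy to close: if $T_{k_1}=T_{k_2}$, then the same $\tilde s$ would appear twice in column $\delta(T_{k_1})$ of $\bm{\tilde C}$ unless $j_1=j_2$, and then equality of the shifts forces $i_1=i_2$ because the $r$ blocks of one column use the $r$ distinct elements of $T_{k_1}$ as shift indices --- so the two entries were not distinct after all. But this must be said; the paper's proof devotes two of its four enumerated cases to exactly this point, and without it the claim that the two entries lie in distinct rows \emph{and} distinct columns is not established.
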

\begin{IEEEproof}Array $\bm{C}$ satisfies C$1$, C$2$, and C$3$ and is thus a PDA. It also satisfies the condition in Definition~\ref{Definition:combinationPDA}, because $c_{i,j,T_k}= c_{i', j', T_{k'}}=s\in[S]$ implies that $T_{k}^{-1}[i]= T_{k'}^{-1}[i']$, and thus the labels  of all columns containing a given symbol $s$ must have non-empty intersection. The statement on rate and subpacketization follows by Theorem~\ref{thm:CPDA} and the discussion before it. 
 \end{IEEEproof}

The coding scheme for resolvable combination networks in \cite{LiTangISIT2016} can be represented in form of a C-PDA, and this C-PDA can be  obtained by applying Transformation~\ref{construction:CPDA} to the PDA of the Maddah-Ali \& Niesen scheme. Theorem~\ref{thm:comPDA} thus allows to recover the following result from \cite{LiTangISIT2016}.
\begin{corollary}\label{corollary1}\emph{For a $(h,r)$-combination network where $r|h$, when $M\in\{0,\frac{Nh}{Kr},\frac{2Nh}{Kr},\cdots,N\}$,  there exists a caching scheme that requires rate
$R_{\textnormal{TR}}\triangleq \frac{K(1-M/N)}{h\left(1+{KMr}/{(Nh)}\right)}$ and has subpacketization level $F_{\textnormal{TR}}\triangleq r{{Kr}/{h}\choose {KMr}/{(Nh)} }$.}
\end{corollary}

We  apply   Transformation \ref{construction:CPDA} to the reduced versions (so as to have the right number of columns) of the low-subpacketization PDAs in Lemmas \ref{lem:1} and \ref{lem:2}. This yields the first  low-subpacketization C-PDAs and caching schemes for resolvable combination networks. 


\begin{theorem}[C-PDA construction from Lemma \ref{lem:1}]\label{thm:new1} \emph{For any
  $(h,r)$-combination network with $r|h$ and cache sizes
$M \in \{\frac{1}{q}\cdot N:q\in\mathbb{N}^+,q\geq 2\}$, the following upper bound is achieved by a scheme with subpacketization level $F_{\textnormal{LSub1}}\triangleq r\left(\frac{N}{M}\right)^{\lceil\frac{KMr}{Nh}\rceil-1}$:
\begin{IEEEeqnarray}{c}
R^\star(M) \leq R_{\textnormal{LSub1}}\triangleq \frac{1}{r}\cdot\left(\frac{N}{M}-1\right).\notag
	\end{IEEEeqnarray}
	(Here, subscript ``LSub" stands for ``low-subpacketization".)}
\end{theorem}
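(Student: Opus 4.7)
The plan is to combine three ingredients already developed in the paper: Lemma~\ref{lem:1} produces a low-subpacketization PDA, Remark~\ref{remark:deletion} lets me prune its columns to any desired width, and Transformation~\ref{construction:CPDA} (via Theorem~\ref{thm:comPDA}) converts the pruned PDA into a C-PDA whose induced caching scheme automatically satisfies the announced rate and subpacketization bounds.

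First I would set $q \triangleq N/M$, which by hypothesis is an integer with $q\geq 2$. Since Transformation~\ref{construction:CPDA} requires a PDA on $\tilde{K}=\binom{h-1}{r-1}$ columns, I would invoke Lemma~\ref{lem:1} with the smallest $m\in\mathbb{N}^+$ satisfying $q(m+1)\geq \tilde{K}$, namely $m=\lceil \tilde{K}/q\rceil-1$, and then delete $q(m+1)-\tilde{K}$ of the columns of the resulting $\bigl(q(m+1),q^m,q^{m-1},q^{m+1}-q^m\bigr)$ PDA. By Remark~\ref{remark:deletion} this pruning yields a $(\tilde{K},q^m,q^{m-1},S')$ PDA for some $S'\leq q^{m+1}-q^m$, with the subpacketization and the per-column number of stars unchanged.

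Next I would feed this pruned PDA into Transformation~\ref{construction:CPDA} and invoke Theorem~\ref{thm:comPDA}. This gives a $(K,F,Z,S)$ C-PDA with $F=r q^m$, $Z=r q^{m-1}$, $S=hS'$, and by the last sentence of Theorem~\ref{thm:comPDA} a subpacketization of $rq^m$ suffices to realize rate $R=S/(Fh)=S'/(rq^m)$. A one-line check shows $M=NZ/F=N/q$, which matches the hypothesis, and
\[
R \;=\; \frac{S'}{rq^m} \;\leq\; \frac{q^{m+1}-q^m}{rq^m} \;=\; \frac{q-1}{r} \;=\; \frac{1}{r}\!\left(\frac{N}{M}-1\right)\;=\;R_{\textnormal{LSub1}},
\]
where the inequality only reflects the fact that column pruning may eliminate some ordinary symbols, which can only lower the rate.

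The remaining bookkeeping is to identify $rq^m$ with $F_{\textnormal{LSub1}}$. Using the identity $K=\binom{h}{r}=\frac{h}{r}\binom{h-1}{r-1}=\frac{h}{r}\tilde{K}$ together with $q=N/M$, one obtains $\tilde{K}/q = KMr/(Nh)$, so the chosen $m$ equals $\lceil KMr/(Nh)\rceil-1$ and the subpacketization becomes $r(N/M)^{\lceil KMr/(Nh)\rceil-1}=F_{\textnormal{LSub1}}$. I do not foresee a genuinely hard step here; the only place where I would be careful is to verify that the column deletion is compatible with Transformation~\ref{construction:CPDA} (which acts columnwise on the labels $T_k$) and that the resulting C-PDA is still a valid C-PDA in the sense of Definition~\ref{Definition:combinationPDA}—both points follow directly from Remark~\ref{remark:deletion} and the column-wise nature of Transformation~\ref{construction:CPDA}.
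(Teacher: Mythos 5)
Your proposal is correct and follows essentially the same route as the paper's own proof: instantiate Lemma~\ref{lem:1} with $m=\lceil\tilde{K}/q\rceil-1$, delete $\lceil\tilde{K}/q\rceil q-\tilde{K}$ columns via Remark~\ref{remark:deletion}, and pass the result through Transformation~\ref{construction:CPDA} and Theorems~\ref{thm:CPDA}--\ref{thm:comPDA}. Your write-up is in fact somewhat more explicit than the paper's (e.g., the bound $S'\leq q^{m+1}-q^m$ and the bookkeeping identifying $rq^m$ with $F_{\textnormal{LSub1}}$), and no step is missing.
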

\begin{IEEEproof} By Lemma \ref{lem:1}, there exists a PDA with $\lceil \frac{\tilde{K}}{q}\rceil q$ columns. Delete any  $\lceil\frac{\tilde{K}}{q}\rceil q-\tilde{K}$ of the columns. Since  each ordinary symbol occurs in $\lceil \frac{\tilde{K}}{q}\rceil$  distinct columns,  some ordinary symbols can be completely deleted whenever $\lceil\frac{\tilde{K}}{q}\rceil q-\tilde{K}\geq \lceil \frac{\tilde{K}}{q}\rceil$. In this case, the reduced PDA has rate smaller than $\frac{N}{M}-1$.
The theorem is concluded by Theorems~\ref{thm:CPDA} and \ref{thm:comPDA}.
\end{IEEEproof}

\begin{theorem}[C-PDA construction from Lemma \ref{lem:2}] \label{thm:new2}\emph{For any
  $(h,r)$-combination network  with $r|h$ and cache sizes $M\in\{\frac{q-1}{q}\cdot N:q\in\mathbb{N}^+,q\geq 2\}$, the following upper bound is achieved by a scheme with subpacketization level $F_{\textnormal{LSub}2}\triangleq\frac{rM}{N-M}\cdot(\frac{N}{N-M})^{\lceil\frac{Kr}{h}\left(1-\frac{M}{N}\right)\rceil-1}$:
 \begin{IEEEeqnarray}{c}
R^\star(M) \leq R_{\textnormal{LSub}2}\triangleq\frac{1}{r}\cdot\left(\frac{N}{M}-1\right).\notag
	\end{IEEEeqnarray}}
\end{theorem}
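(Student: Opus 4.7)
The plan is to mirror the proof of Theorem~\ref{thm:new1}, substituting Lemma~\ref{lem:2} for Lemma~\ref{lem:1} as the source of the low-subpacketization PDA, and then feed the result through Transformation~\ref{construction:CPDA} and Theorems~\ref{thm:CPDA}, \ref{thm:comPDA}. Since $M=\frac{q-1}{q}N$ gives $\frac{N}{N-M}=q\in\mathbb{N}^+$, Lemma~\ref{lem:2} directly applies.

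First I would fix the integer $m$ so that the number of columns $q(m+1)$ of the PDA from Lemma~\ref{lem:2} is just large enough to cover $\tilde{K}=\binom{h-1}{r-1}$, namely $m+1=\lceil \tilde{K}/q\rceil$. This yields a $(q-1)(m+1)$-$\bigl(q(m+1),(q-1)q^m,(q-1)^2q^{m-1},q^m\bigr)$ PDA of rate $\frac{q^m}{(q-1)q^m}=\frac{1}{q-1}=\frac{N}{M}-1$ and subpacketization $\tilde{F}=(q-1)q^m$.

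Next I would delete any $q(m+1)-\tilde{K}$ columns so that the reduced array has exactly $\tilde{K}$ columns; by Remark~\ref{remark:deletion} this is still a PDA with the same subpacketization $\tilde{F}$, and its number of distinct ordinary symbols can only decrease, so its rate is at most $\frac{N}{M}-1$. Plugging this reduced PDA into Transformation~\ref{construction:CPDA} with parameters $(h,r)$ produces, by Theorem~\ref{thm:comPDA}, a C-PDA with $F=r\tilde{F}=r(q-1)q^m$, $K=\binom{h}{r}$, and a rate on the server-relay links at most
\begin{IEEEeqnarray}{c}
\frac{S}{Fh}\leq \frac{1}{r(q-1)}=\frac{1}{r}\Bigl(\frac{N}{M}-1\Bigr),\notag
\end{IEEEeqnarray}
where Theorem~\ref{thm:comPDA} further guarantees that subpacketization $F=r\tilde{F}$ (not $hF$) already suffices. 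To match the announced expression for $F_{\textnormal{LSub2}}$, I would then use $K r/h=\binom{h-1}{r-1}=\tilde{K}$ together with $\frac{N}{N-M}=q$ and $\frac{M}{N-M}=q-1$ to rewrite $r(q-1)q^{m}$ as $\frac{rM}{N-M}\bigl(\frac{N}{N-M}\bigr)^{\lceil \frac{Kr}{h}(1-\frac{M}{N})\rceil-1}$, closing the argument via Theorem~\ref{thm:CPDA}.

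The only non-routine point is the column-deletion step: one must check that after discarding $q(m+1)-\tilde{K}$ columns the remaining array is still a valid PDA in the sense of Definition~\ref{def1}, and that the rate bound $\tfrac{1}{q-1}$ is preserved. The first is immediate from Remark~\ref{remark:deletion}, and the second follows because each ordinary symbol originally appears in $(q-1)(m+1)$ columns, so a total symbol count in the delivery phase can only go down when columns are removed; hence the guaranteed rate is no worse than the one claimed. Everything else is bookkeeping with the explicit Lemma~\ref{lem:2} parameters.
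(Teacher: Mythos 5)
Your proposal is correct and follows essentially the same route as the paper: instantiate Lemma~\ref{lem:2} with $m+1=\lceil\tilde{K}/q\rceil$, delete the surplus $q(m+1)-\tilde{K}$ columns via Remark~\ref{remark:deletion}, and push the result through Transformation~\ref{construction:CPDA} and Theorems~\ref{thm:CPDA}--\ref{thm:comPDA}. The only (immaterial) difference is that the paper additionally notes that no ordinary symbol is wiped out by the deletion---each occurs $\lceil\tilde{K}/q\rceil(q-1)$ times---so the rate is exactly $\frac{1}{r}\bigl(\frac{N}{M}-1\bigr$), whereas you only argue it cannot increase, which suffices for the stated upper bound.
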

\begin{IEEEproof}
Similarly to the proof of Theorem~\ref{thm:new1}, except that  deleting $\lceil\frac{\tilde{K}}{q}\rceil q-\tilde{K}$ columns does not  delete any of the ordinary symbols, as each of them occurs $\lceil \frac{\tilde{K}}{q}\rceil(q-1)$ times. 
\end{IEEEproof}


For fair comparison, we compare the new schemes with the scheme  in
\cite{LiTangISIT2016} (Corollary \ref{corollary1}) when $K \leq N$ for the same memory size.  We start with a comparison of the required rates. If $M=\frac{N}{q}$ for some  integer $q\geq2$, then
$
\frac{KMr}{KMr+Nh}\leq\frac{R_{\textnormal{TR}}}{R_{\textnormal{LSub1}}} \leq1.
$
Similarly, if
 $M=\frac{(q-1)N}{q}$ for some  integer $q\geq2$, then
$
\frac{KMr}{KMr+Nh}\leq\frac{R_{\textnormal{TR}}}{R_{\textnormal{LSub2}}} \leq1.
$
As a consequence, if  $M=\frac{N}{q}$ or $M=\frac{(q-1)N}{q}$ for some  integer $q\geq2$, then
\begin{IEEEeqnarray}{c}
\varliminf_{K \to \infty} \frac{R_{\textnormal{TR}}}{R_{\textnormal{LSub1}}} =1 \qquad \textnormal{or} \qquad \varliminf_{K \to \infty} \frac{R_{\textnormal{TR}}}{R_{\textnormal{LSub2}}}  =1.\notag
\end{IEEEeqnarray}

On the other hand, for large values of $K \gg 1$,
by  Corollary~\ref{corollary1} and \cite[Lemma~4]{YanPDA2017}, the subpacketization levels of the schemes satisfy
\begin{IEEEeqnarray}{c}
F_{{\textnormal{TR}}}\sim \sqrt{\frac{N^2hr}{2\pi KM(N-M)}}\cdot e^{\frac{Kr}{h}\left(\frac{M}{N}\ln \frac{N}{M}+(1-\frac{M}{N})\ln \frac{N}{N-M}\right)},\notag
\end{IEEEeqnarray}
and
\begin{IEEEeqnarray}{rCl}
F_{\textnormal{LSub1}}&\leq&
re^{\frac{Kr}{h}\cdot\frac{M}{N}\ln\frac{N}{M}},\notag\\
~F_{\textnormal{LSub2}}&\leq& \frac{rM}{N-M} e^{\frac{Kr}{h}\cdot(1-\frac{M}{N})\ln \frac{N}{N-M}}.\notag
\end{IEEEeqnarray}


As a consequence, if  $M=\frac{N}{q}$ or $M=\frac{(q-1)N}{q}$  for some  integer $q\geq2$, then
\begin{IEEEeqnarray}{c}
\varliminf_{K \to \infty} \frac{F_{\textnormal{TR}}}{F_{\textnormal{LSub1}}} =\infty \qquad \textnormal{or} \qquad \varliminf_{K \to \infty} \frac{F_{\textnormal{TR}}}{F_{\textnormal{LSub2}}}  =\infty.\notag
\end{IEEEeqnarray}

\section{Achieving the Cutset Bound with Low Subpacketization Level}\label{sec:cut_scheme}

Throughout this section, $r,h$ denote positive integers with $r\leq h$. But $r$ does not necessarily divide $h$.

	Let $S_1,\ldots, S_{ h \choose r-1}$ denote all the subsets of $[h]$ of size $r-1$.
Define  $\bm B$ as the   ${h\choose r-1}$-by-${h\choose r}$ dimensional  array  
with  element $b_{j,T}$  in row $j\in \{1,\ldots {h \choose r-1}\}$ and column $T\in\mathbf{T}$, where
\begin{IEEEeqnarray}{c}
b_{j,T}=\left\{\begin{array}{ll}
                  *,&\mbox{if}~S_j \not\subset T,  \\
                 T\backslash S_j, &\mbox{if}~S_j\subset T.
               \end{array}
\right.\label{eq:B}
\end{IEEEeqnarray}
Notice that the set of arrays $\mathbf{B}$ forms a subset of the  PDAs in \cite{YanBipatite2017}. They  can be proved to be C-PDAs.

\begin{example}\label{example:boundconstruction} For  $h=4$ and $r=2$, the C-PDA $\bm B$ is:
	\begin{center}
        \small{\begin{tabular}{c|c|c|c|c|c}  \bottomrule
   $\{1,2\}$ &$\{1,3\}$&$\{1,4\}$&$\{2,3\}$&$\{2,4\}$&$\{3,4\}$\\\hline
 $2$ & $3$ & $4$ & $*$ & $*$ & $*$ \\
 $1$ & $*$ & $*$ & $3$ & $4$ & $*$ \\
 $*$ & $1$ & $*$ & $2$ & $*$ &$4$  \\
 $*$ & $*$ & $1$ & $*$ & $2$ &$3$\\\hline
   \toprule
        \end{tabular}}
\end{center}
\end{example}


The caching scheme corresponding to the C-PDA~$\mathbf{B}$, allows to determine the optimal rate $R^\star(M)$ for large cache sizes $M$. 
\begin{theorem}\label{thm:boundscheme} \emph{For an $(h,r)$-combination network:
	\begin{IEEEeqnarray}{c}
	R^\star(M)= \frac{1}{r}\left(1-\frac{M}{N}\right), \quad M\in \bigg[N\frac{K-h+r-1}{K},\ N\bigg].\notag
	\end{IEEEeqnarray}
	This  can be achieved with subpacketization level $F= {h \choose r-1}$ when $M=N\frac{K-h+r-1}{K}$.}
\end{theorem}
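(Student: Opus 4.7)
The plan is to prove the theorem in two parts, matching an upper bound (achievability) with a lower bound (converse). The achievability at the corner point $M_0 = N(K-h+r-1)/K$ is obtained directly from the C-PDA $\bm B$ of Lemma~\ref{lemma3}; the upper bound is then extended to the whole interval by memory-sharing with the trivial point $(M,R)=(N,0)$. The converse follows immediately from a single term in the cut-set bound stated in \eqref{eqn:bound}.

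For the achievability at $M=M_0$, I would first verify that the parameters of the C-PDA $\bm B$ line up with the claimed rate, memory, and subpacketization. By Lemma~\ref{lemma3}, $\bm B$ has $F={h\choose r-1}$ rows, $Z={h\choose r-1}-r$ stars per column, and $S=h$ ordinary symbols, so the associated scheme (via Theorem~\ref{thm:CPDA}) uses cache $M=N\cdot Z/F = N\bigl(1-r/{h\choose r-1}\bigr)$ and rate $R=S/(Fh)=1/{h\choose r-1}$. A short computation using ${h\choose r-1}={h\choose r}\cdot r/(h-r+1)=Kr/(h-r+1)$ shows that $r/{h\choose r-1}=(h-r+1)/K$, which gives $M=M_0=N(K-h+r-1)/K$ and $R=(h-r+1)/(Kr)=\frac{1}{r}(1-M_0/N)$, matching the target. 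Moreover, as remarked in the proof of Lemma~\ref{lemma3}, each relay transmits exactly one coded packet, so the load is automatically balanced and $F={h\choose r-1}$ (without the factor of $h$ from Remark~\ref{remark:F}) is sufficient.

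For $M\in\bigl(M_0,N\bigr]$, I would use standard memory-sharing between the operating point $(M_0,\frac{1}{r}(1-M_0/N))$ and $(N,0)$: the interpolating line has slope $-\frac{1}{Nr}$ and therefore gives $R=\frac{1}{r}(1-M/N)$ throughout the interval. The subpacketization guarantee of $F={h\choose r-1}$ is claimed only at $M=M_0$, so nothing extra needs to be said for interior memory values. For the converse, I would specialize the cut-set bound \eqref{eqn:bound} to $t=r$, which forces $l\in\bigl[\min\{N,{r\choose r}\}\bigr]=\{1\}$; the bound collapses to $R^\star(M)\geq \frac{1}{r}\bigl(1-\frac{1}{\lceil N\rceil}M\bigr)=\frac{1}{r}(1-M/N)$, valid for every $M$, and in particular matching the achievable rate on the whole interval.

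The main obstacle, as far as I can see, is not conceptual but bookkeeping: one must be careful in checking the identity ${h\choose r-1}=Kr/(h-r+1)$ and in confirming that the $t=r,l=1$ term of the cut-set indeed yields $(1-M/N)/r$ (as opposed to some weaker variant depending on whether $N\geq 1$ or $N\geq K$). Once those elementary identities are in place, the theorem follows by directly combining Lemma~\ref{lemma3}, memory-sharing, and the specialized cut-set term.
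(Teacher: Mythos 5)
Your proposal is correct and follows essentially the same route as the paper: achievability at $M_0=N(K-h+r-1)/K$ via the C-PDA $\bm B$ of Lemma~\ref{lemma3} (with the identity ${h\choose r-1}=Kr/(h-r+1)$ doing the bookkeeping), extension by memory-sharing with $(N,0)$, and the converse from the cut-set bound \eqref{eqn:bound} at $t=r$, $l=1$. The observation that the load is automatically balanced across relays, so that $F={h\choose r-1}$ suffices without the factor $h$ of Remark~\ref{remark:F}, also matches the paper's argument.
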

\begin{IEEEproof} The converse follows from the cutset lower bound in 	\cite{MJi2015}. For $M=N\left(1-\frac{h-r+1}{K}\right)$, the upper bound follows by Theorem~\ref{thm:CPDA} and the caching scheme corresponding to
the C-PDA $\mathbf{B}$ in \eqref{eq:B}.
For $M > N\left(1-\frac{h-r+1}{K}\right)$, the upper bound follows by time/memory sharing arguments.  
\end{IEEEproof}



The optimal rate $R^\star(M)$ is in general not achieved by the uncoded placement scheme in  \cite{LiTangISIT2016} (see Corollary \ref{corollary1}). In fact, at the point  $M=N\cdot\left(1-\frac{h-r+1}{K}\right)$,   the scheme in \cite{LiTangISIT2016} requires rate
$R_{\textnormal{TR}}=\frac{1}{r}\left(1-\frac{M}{N}\right)\cdot \frac{Kr}{Kr-(r-1)(h-r)}$, which is strictly
larger than $R^\star(M)$ whenever $r\geq2$. Moreover, it has a subpacketization level
$r{{h-1\choose r-1}\choose r-1}$, which is significantly higher than the one in Theorem \ref{thm:boundscheme}.

%
\section{Conclusion}\label{sec:conclusion}
We introduced the C-PDAs (a subclass of PDAs) to characterize caching schemes with uncoded
placement for combination networks. We also proposed a method to transform certain PDAs to
C-PDAs for resolvable networks. This allowed us to obtain the first low-subpacketization
schemes for resolvable combination networks with a  rate that is close to the rate of the uncoded placement schemes in \cite{LiTangISIT2016}. We also proposed  C-PDAs for general
combination  networks. These C-PDAs have low subpacketization level and achieve the cut-set
lower bound when the cache memories are sufficiently large. 
\section*{Acknowledgment}
The work  of Q. Yan and M. Wigger has been supported by the ERC  Grant \emph{CTO Com}.

\end{document}